\author{Levente Bodnár}
\title{CLIQUE as an AND of Polynomial-Sized Monotone Constant-Depth Circuits} 
\date{February 10, 2022}
\begin{document}

\maketitle

\begin{abstract}
This paper shows that calculating $k$-CLIQUE on $n$ vertex graphs, requires the AND of at least $2^{n/4k}$ monotone, constant-depth, and polynomial-sized circuits, for sufficiently large values of $k$. The proof relies on a new, monotone, one-sided switching lemma, designed for cliques.
\end{abstract}

\section{Introduction} 

Almost all Boolean functions require exponentially large circuits \cite{shannon_synthesis_1949} however, a super-polynomial circuit lower bound on NP complete problems appear out of reach. As lower bounds for general circuits seem difficult to prove, there has been more focus on restricted circuit models. 

A well understood restriction, where there are strong lower bounds, is the class of constant-depth circuits. Early results \cite{Ajtai198311FormulaeOF, FSSparity} proved that PARITY requires super-polynomial size constant-depth circuits. This was subsequently refined in \cite{yao_circuit, hastad_almost_1986}, developing the powerful switching method.

For monotone functions, another fruitful restriction is the class of monotone circuits. Razborov was the first to prove super-polynomial bound on the NP complete problem CLIQUE, for monotone circuits using the method of approximations \cite{razborov_lower_1985}. This was improved by \cite{andreev, AlonBoppana1987, RossmanMonotone} and recently \cite{robust_sunfl} with the current best truly exponential bound $n^{\Omega(k)}$ when $k \leq n^{1/3 - o(1)}.$

\vspace{5mm}

The simplest circuit expressing CLIQUE, is an OR of monomials, one for each possible clique. The top gate has fan-in $\binom{n}{k}$. Alternatively, one can express CLIQUE as monotone AND of monomials, one for each maximal graph without $k$-clique. The number of maximal graphs without triangles was investigated in \cite{triangle_free}, where they show that their number is asymptotically $2^{n^2/8 + o(n)}$. The same paper (in remark 7) constructs $2^{\frac{1-1/(k-1)+o(1)}{4}n^2}$ many maximal graphs without $k$-clique, but method only works for small $k < \log_{4/3}(n)$. Note that any maximal graph without $k$-clique can be extended to a maximal graph without $k'$-clique, by simply adding $k'-k$ points connected to everything. This gives a $2^{\frac{1-1/\log_{4/3}(n-k) + o(1)}{4}(n-k)^2}$ lower bound on the number of maximal, $k$
-clique free graphs, when $k>\log_{4/3}(n)$. Correspondingly, there is a monotone, depth-2 circuit, with top AND gate, calculating $k$-CLIQUE, with top fan-in lower bounded by the same value. This paper is a small progress towards answering the natural question, whether CLIQUE can be expressed as the AND of a smaller number, monotone, constant-depth, and polynomial-sized circuits.

\begin{theorem}\label{main_theorem}
For any $c_{d}, c_{s}$ constants, large enough $n$ and $\log(n)^{c_{d}+6} < k,$ one cannot have less than $2^{n/4k}$ many monotone circuits $\{f_j\}$ on $\binom{n}{2}$ input bits, each with depth and size bounded by $c_{d}$ and $n^{c_{s}}$ respectively, such that $\bigwedge_j f_j$ computes $k$-CLIQUE.
\end{theorem}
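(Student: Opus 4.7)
The strategy is to lower bound $M$ by an averaging plus switching argument: produce a distribution $\mathcal{D}$ on $k$-clique-free graphs, and use the new monotone one-sided switching lemma promised in the abstract to show that no single circuit $f_j$ can reject more than a $2^{-n/4k}$-fraction of $\mathcal{D}$. Since $\bigwedge_j f_j\equiv 0$ on the support of $\mathcal{D}$, some $f_j$ rejects every $G\in\mathrm{supp}(\mathcal{D})$, and hence $\sum_j\Pr_{\mathcal{D}}[f_j=0]\geq 1$; the theorem follows.

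\textbf{Hard distribution and reduction.} I would take $\mathcal{D}$ to be the natural distribution for a $2^{n/4k}$-style bound: sample a uniformly random balanced $(k-1)$-coloring $\chi:V\to[k-1]$, suppress all intra-color edges, and keep each inter-color edge independently with probability $p$ (chosen in the course of the proof). By pigeonhole every graph in the support has no $k$-clique. After averaging it remains to prove
\[
\Pr_{G\sim\mathcal{D}}[f(G)=0]\leq 2^{-n/4k}
\]
for every monotone depth-$c_d$ size-$n^{c_s}$ circuit $f$ satisfying $f\geq\mathrm{CLIQUE}$.

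\textbf{Applying the switching lemma.} I would view $\mathcal{D}$ as a two-stage random restriction: first reveal $\chi$ and fix intra-color edges to $0$, leaving inter-color edges as $*$; then draw the inter-color edges. The monotone one-sided switching lemma advertised in the abstract is then applied inductively, stripping one gate layer of $f$ per application and losing only a $\mathrm{polylog}(n)$ factor each time. The output should be that, except on a $2^{-\Omega(n/k)}$-fraction of colorings $\chi$, the intermediate restriction $f|_\chi$ collapses to a narrow monotone DNF whose minterms are small partial cliques of width $O(\log(n)^{c_d})$ embedded in the multipartite skeleton. One-sidedness is essential: we only need to track $1$-inputs of $f$, since the hypothesis $f\geq\mathrm{CLIQUE}$ is all we know, and this asymmetry is what buys the sharper per-layer estimate and keeps the total loss subexponential.

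\textbf{Finishing the probabilistic bound.} On a good $\chi$, a sunflower-style covering step applied to the at most $n^{c_s}$ remaining minterms extracts a subfamily whose vertex supports are nearly disjoint; the independent second-stage draws of inter-color edges then make these minterms essentially independent, and the probability that every extracted minterm is simultaneously falsified is at most $2^{-n/4k}$ once $p$ is tuned. Adding the $2^{-\Omega(n/k)}$ bad-$\chi$ term gives the stated rejection bound, and the union bound over the $M$ circuits finishes.

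\textbf{Main obstacle.} The whole weight of the argument sits on the monotone one-sided switching lemma. It must (i) work with the CLIQUE-adapted partition-plus-edges restriction rather than the usual product $p$-random restriction; (ii) preserve enough monotone-clique structure so that surviving minterms are genuinely small subcliques; and (iii) lose only $\log(n)^{O(1)}$ per depth layer. Requirement (iii) is what forces the hypothesis $k>\log(n)^{c_d+6}$: the $c_d$ successive applications each absorb a polylog of slack, and the target $2^{n/4k}$ only survives past this threshold. I expect the lemma to be proved by a Razborov-style encoding argument customized to the multipartite clique structure, with the $+6$ in the exponent accounting for the encoding overhead plus constant slack in the sunflower step.
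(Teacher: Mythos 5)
Your reduction skeleton — note that each $f_j \geq \mathrm{CLIQUE}$, bound the one-sided rejection probability of a single small monotone circuit over a hard distribution of (essentially) clique-free graphs, then average/union bound over the $f_j$ — is exactly the paper's skeleton, so that part is fine (modulo the slip "some $f_j$ rejects every $G$", which should read "for every $G$ some $f_j$ rejects it"; the conclusion $\sum_j \Pr[f_j=0]\geq 1$ is still valid). The genuine gap is in the hard distribution and the correlation bound you defer to the advertised lemma. The paper's switching machinery works exclusively with \emph{monotone} restrictions $\rho : K([n]) \to \{1,\star\}$: edges are only ever fixed to present, never to absent, and the hard distribution is not a planted $(k-1)$-partite structure but simply the dense Erd\H{o}s--R\'enyi graph $G \sim \operatorname{ER}(n,1-p)$ with $p = \Theta(\log(n)^{-c_d-4})$, which arises automatically as $\rho^{-1}(1)$ from composing the per-layer random restrictions. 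Your first restriction stage fixes all intra-color edges to $0$; neither of the paper's lemmas (CNF-to-DNF switching with small clique loss, DNF-to-CNF switching under $R^{1/(2t)}$ restrictions) tolerates $0$-assignments, and the bookkeeping object $Z_\rho(f)$ — the set of $k$-cliques forcing $f_\rho$ to true, which is precisely how the one-sided hypothesis $f \geq \mathrm{CLIQUE}$ enters (initially $Z(f) = \binom{[n]}{k}$) — is defined only for such monotone restrictions. So the bound you actually need, $\Pr_{G\sim\mathcal{D}}[f(G)=0] \leq 2^{-n/4k}$ for your colored distribution, is left entirely unproved and does not follow from the paper's lemma as stated; making it work would require a genuinely different switching argument adapted to restrictions that kill edges.

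A second, related discrepancy concerns the hypothesis $k > \log(n)^{c_d+6}$. Because the paper's distribution is plain $\operatorname{ER}(n,1-p)$, its support is \emph{not} clique-free; that hypothesis is used exactly to make $\binom{n}{k}(1-p)^{\binom{k}{2}} = o(1)$, i.e., to guarantee the dense random graph has no $k$-clique with probability $1-o(1)$, so that with positive probability a clique-free graph is accepted by all $f_j$ simultaneously, contradicting $\bigwedge_j f_j = \mathrm{CLIQUE}$. In your plan the support is clique-free by construction, and you instead attribute the threshold to per-layer polylog losses in the switching lemma; that is not where it comes from in the paper (the polylog losses only fix the value of $p$), and in your setup it is unclear what role the assumption would play at all. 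In short: the averaging framework matches, but the core one-sided correlation estimate for your specific distribution is missing, and it cannot be imported from the paper's monotone-restriction lemmas.
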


The bound looks odd (and surprisingly strong for small $k$) compared to classical circuit bounds for CLIQUE due to the restricted setting. This shows that OR gates are much better at expressing CLIQUE than AND gates in this sense. The result follows from the main lemma of this paper, that monotone constant-depth polynomial-sized circuits with zero-error on YES instances have $2^{-n/4k}$ correlation with cliques on NO instances. This gives a strong asymmetric result; note that a single monomial, stating that a certain $k$ subset is a clique, has zero error on NO instances, but on YES instances, has $\binom{n}{k}^{-1}$ correlation.

The main novelty of the proof is an approximation of a monotone CNF circuit having larger fan-in, by a monotone DNF circuit having a small fan-in, where this approximation is correct on most of the relevant cliques. Approximation of Boolean functions with depth 2 circuits was initially investigated by \cite{approx1}. Continuing this line of work, \cite{approx3} showed strong one-sided bounds for all monotone functions.

\vspace{5mm}

Section \ref{sec_notation} contains the main notation used in this paper, section \ref{sec_outline} covers the main definitions, lemmas and an overall view of the proof. Section \ref{sec_proofs} contains all the technical steps and proofs of lemmas and theorems. The paper finishes with a few concluding thoughts and open problems.

\section{Notation}\label{sec_notation}

For positive integer $n$, use $[n]$ for the set $\{0, 1, ..., n-1\}$. For a set $U$ and integer $m$ write $\binom{U}{m} = \{A \subseteq U : |A|=m\}$. When $m=2$ write $K(U)$ instead $\binom{U}{2}$. For the power set write $[2]^U = \{A \subseteq U\}$.

Use $A, B$ for subsets of $[n]$. Similarly $G, H$ for subsets of $K([n])$ and identify them with graphs. $k$ is a number that can depend on $n$, and use $X, Y, Z$ for subsets of $\binom{[n]}{k}$. The curly versions represent families, so $\mathcal{A}, \mathcal{B}$ are used for families of subsets of $[n]$. Similarly $\mathcal{G}, \mathcal{H}$ are families of subsets of $K([n])$. For a family $\mathcal{A}$ write $\mathcal{K}(\mathcal{A}) = \{K(A) : A \in \mathcal{A}\}$.

Monotone circuits are represented using the letters $f$, $g$, and $h$. Given a circuit with input set $U$, for any $A \subseteq U$ write $f(A)$ for the value of the circuit on input $A$. Therefore they can be used as functions $f : [2]^U \rightarrow [2]$. For convenience use $f=g$ to mean that $f$ and $g$ compute the same function, not that the circuits are the same. The depth $d(f)$ of a circuit is the longest path from the output node to any of the input nodes. The size $|f|$ is the number of nodes in the circuit. For $f, g$ on the same input set $U$, write $f \leq g$ if $\forall A \subseteq U \ f(A) \leq g(A)$. For given $\mathcal{A} \subseteq [2]^U$ the DNF with monomials defined by $\mathcal{A}$ is $i_\mathcal{A} : [2]^U \rightarrow [2]$. This gives the function $$i_{\mathcal{A}}(B) \mapsto \begin{cases} 1 & \text{if \ } \exists A \in \mathcal{A} \  (A \subseteq B) \\ 0 & \text{otherwise}. \end{cases}$$

The letters $\rho, \sigma$ are monotone restrictions. On an input set $U$ they are functions $\rho : U \rightarrow \{1, \star\}$. Given a restriction $\rho$ and a circuit $f$ the restricted circuit $f_\rho : [2]^{\rho^{-1}(\star)} \rightarrow [2]$ is the circuit where every input source from $\rho^{-1}(1)$ is set to constant true. Given an input set $U$, use $R_U^p$ as a distribution on monotone restrictions on $U$, that maps $\rho(u)$ to $\star$ with probability $p$ (and therefore maps to $1$ with probability $1-p$) independently for each $u \in U$.

\vspace{5mm}

\section{Outline}\label{sec_outline} The result follows from the following proposition on a single bounded depth and size monotone circuit. 

\begin{proposition}\label{correl_prop}
For any $c_{d}, c_{s}$ constants, large enough $n$ and any monotone circuit $f$ on $\binom{n}{2}$ input bits, with depth $d(f) \leq c_{d}$ and size $|f| \leq n^{c_{s}}$ such that $f \geq i_{\mathcal{K}\left(\binom{[n]}{k}\right)}$, $f$ must also satisfy $$\underset{G \sim \operatorname{ER}(n, 1-p)}{\mathbb{P}}(f(G)) \geq 1-2^{-n/3k}$$ where $$p = c(c_s, c_d) \log(n)^{-c_d - 4}.$$
\end{proposition}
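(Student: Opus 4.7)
The plan is to reduce the statement to a probability bound on the collapse of $f$ under a suitable random restriction. First, couple: sampling $\rho \sim R^p_{K([n])}$ and setting $G_\rho := \rho^{-1}(1)$ realises $G_\rho$ as a sample from $\mathrm{ER}(n, 1-p)$ and identifies $f(G_\rho)$ with $f_\rho(\emptyset)$. Since $f_\rho$ is monotone, $f_\rho(\emptyset) = 1$ is equivalent to $f_\rho \equiv 1$, so it suffices to show that $f_\rho \equiv 1$ holds with probability at least $1 - 2^{-n/3k}$ over $\rho$.

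Second, I would split $\rho$ into $c_d$ consecutive restrictions $\rho_1, \ldots, \rho_{c_d}$ at rates $p_1, \ldots, p_{c_d}$ with $\prod p_i \approx p$, each of order $\log(n)^{-O(1)}$. Inductively maintain: at the end of stage $i$, the current circuit $f^{(i)}$ has depth $c_d - i$, is monotone, and still dominates $i_{\mathcal{K}(X_i)}$ for a large surviving family $X_i \subseteq \binom{[n]}{k}$. At each stage the bottom two layers of $f^{(i-1)}$ form a monotone CNF or DNF of moderate fan-in. Monotone DNFs collapse cheaply: a monomial of size $\ell$ becomes the constant $1$ under $\rho_i$ with probability $(1-p_i)^\ell$, so short monomials die and the residual narrow DNF can be absorbed into the $\wedge$-gate above without disturbing the inductive hypothesis.

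The core step treats monotone CNFs: here I would invoke the paper's novel monotone one-sided switching lemma to replace the wide bottom CNF with a small-width monotone DNF that agrees with it on (essentially) every $k$-clique in $X_{i-1}$; the substitution yields a depth-reduced $f^{(i)}$ that still dominates $i_{\mathcal{K}(X_i)}$ for a surviving subfamily $X_i$. After $c_d$ stages $f^{(c_d)}$ is a constant, and since it must still dominate $i_{\mathcal{K}(X_{c_d})}$ with $X_{c_d}$ nonempty with overwhelming probability (a routine moment computation using $k > \log(n)^{c_d+6}$), that constant is forced to be $1$. A union bound over the at most $n^{c_s}$ sub-circuits and $c_d$ stages absorbs the per-gate switching failure probabilities, giving the final $2^{-n/3k}$ bound thanks to the polylogarithmic gap between $k$ and $1/p$.

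The main obstacle is the monotone one-sided switching lemma itself. Standard H\aa{}stad switching fails in the monotone setting because monotone restrictions only push variables upward, and the ``CNF to narrow DNF'' direction is far from tight for arbitrary inputs. The novelty is to exploit the clique-covering hypothesis so that the switched DNF need only be correct on the distribution of $k$-cliques rather than on arbitrary inputs — a much weaker requirement that is achievable for a small width parameter. Controlling the switching failure probability, essentially by analysing how a uniformly random $k$-clique intersects each CNF clause, while keeping the width constant independently of $n$ and propagating the inductive clique-domination hypothesis through $c_d$ rounds, is the technical heart of the proof.
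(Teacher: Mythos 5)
The high-level strategy — alternate between switching the bottom layer and applying random restrictions, driven by a monotone one-sided switching lemma that only needs to be correct on $k$-cliques — matches the paper. But there are two genuine gaps in how the rounds are handled.

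First, the treatment of DNF layers is not a valid depth-reduction step. You propose that short monomials become constant $1$ and the ``residual narrow DNF can be absorbed into the $\wedge$-gate above,'' but an OR of monomials cannot be absorbed into an AND gate: an AND-of-OR-of-AND is still depth $3$. The paper's Lemma~\ref{easy_lemma} is a genuine DNF$\to$CNF switching: after restricting at rate $1/(2t)$, a $t$-DNF becomes an $s$-CNF with probability $\geq 1-2^{-s-1}$, and that $s$-CNF then merges with the $\wedge$-gate above. This is where the random restriction and the exponentially small failure probability actually enter. Your description elides the mechanism that makes depth go down and does not supply a failure bound that could be union-bounded over the $n^{O(c_s)}$ gates.

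Second, the endgame is wrong. After the alternating switches you do not reach a constant: you reach a depth-2 object, a $t$-DNF $f''$ with $(f'')_\rho \leq (f')_\rho$ and $|Z_\rho(f'')| \geq \binom{n}{k}(1-n^{-1})$. Non-emptiness of the surviving clique family only tells you $f''_\rho$ is $1$ \emph{somewhere}, not at the all-$0$ point, so it does not force $f_\rho(\emptyset)=1$. The paper still has real work to do here: it greedily extracts $x \approx n(n-1)/(2sk(k-1))$ pairwise-disjoint monomials from $f''$ (using the surviving cliques to certify that the greedy step can continue), then applies one final restriction $\sigma \sim R^{1/(2t)}$ and shows that with probability $\geq 1-(1-e^{-2})^{x}$ some disjoint monomial is fully satisfied. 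That is the step producing the $2^{-n/3k}$ tail, and it is absent from your sketch. Relatedly, the hypothesis $k > \log(n)^{c_d+6}$ is not used in Proposition~\ref{correl_prop} at all; it only enters the proof of Theorem~\ref{main_theorem}, to bound the probability that a $k$-clique appears in $G \sim \operatorname{ER}(n,1-p)$.
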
 

\vspace{5mm}

The method is a switching of the bottom layers between small ($O(\log(n)^2)$) fan-in DNF and medium ($O(n/k)$) fan-in CNF. The rationale behind the numbers is the desire to ensure that the circuits remain ineffective at approximating CLIQUE. This holds true intuitively for CNFs even when larger fan-ins are allowed. However, for DNFs, a stronger fan-in restriction is necessary. To express this correlation with CLIQUE, the following approach will be employed:

\begin{definition}
For a monotone restriction $\rho : K([n]) \rightarrow \{1, \star \}$ and a monotone circuit $f:[2]^{K([n])} \rightarrow [2]$ both on graphs, write $Z_\rho(f)$ for the maximal collection of cliques implying the function. It is the largest $Z_\rho(f) \subseteq \binom{[n]}{k}$ where it holds that $\left(i_{\mathcal{K}\left(Z_\rho(f)\right)}\right)_\rho \leq f_\rho$. 
\end{definition}

The main Lemma of the paper shows that the CNF to DNF switching reduces the possible cliques forcing the circuit to true (the set $Z(f)$) by a negligible amount.
\begin{lemma}[CNF to DNF switching with small clique error]\label{hard_lemma}
For any $f:[2]^{K([n])} \rightarrow [2]$ monotone $s$-CNF and a monotone restriction $\rho:K([n]) \rightarrow \{1, \star\}$, one can find $g$ monotone $t$-DNF, that satisfies $g_\rho \leq f_\rho$ and $$\left| Z_\rho(f) \setminus Z_\rho(g) \right| \leq \binom{n}{k}\left(s \frac{k}{n}\right)^{\sqrt{t/2}}.$$
\end{lemma}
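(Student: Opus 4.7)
The plan is to take $g$ to be the monotone DNF whose monomials are clique-edge-sets $K(V)$ for vertex sets $V \subseteq [n]$ with $\binom{|V|}{2} \leq t$, and to show that almost every $K \in Z_\rho(f)$ contains such a $V$ via a vertex-growing greedy procedure. Throughout, abbreviate $E^\star := \rho^{-1}(\star)$. First I would normalize: under $\rho$, any clause of $f$ containing an edge of $\rho^{-1}(1)$ is trivially satisfied and may be dropped, so $f_\rho$ becomes a monotone $s$-CNF $\phi$ over $E^\star$, and $Z_\rho(f) = \{K \in \binom{[n]}{k} : K(K) \cap E^\star \text{ hits every clause of } \phi\}$.

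Next I would fix an ordering of the clauses of $\phi$ and an ordering of the edges inside each clause, and for a clique $K$ build an increasing vertex sequence $\emptyset = V_0 \subseteq V_1 \subseteq \cdots \subseteq K$ inductively as follows. At step $j$, if $K(V_{j-1}) \cap E^\star$ already hits every clause of $\phi$, stop; otherwise let $C_{i_j}$ be the first non-hit clause, let $e_j$ be the first edge of $C_{i_j}$ lying in $K(K)$, and set $V_j := V_{j-1} \cup e_j$. The essential observation is that $e_j$ cannot have both endpoints in $V_{j-1}$, for otherwise $e_j \in K(V_{j-1})$ would already have hit $C_{i_j}$; hence $|V_j| \geq |V_{j-1}|+1$, so after $m$ steps $|V_m| \geq m+1$. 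Let $m(K)$ denote the total number of steps. I then define $g$ to be the monotone DNF whose monomials are exactly the edge sets $K(V)$ with $\binom{|V|}{2} \leq t$ and $K(V) \cap E^\star$ a transversal of $\phi$. Each such monomial restricts to a transversal of $\phi$, so $g_\rho \leq f_\rho$, and whenever the greedy run for $K$ terminates with $|V_{m(K)}| \leq \sqrt{2t}$, then $V := V_{m(K)}$ is a witness that $K \in Z_\rho(g)$.

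To count the bad cliques $K \in Z_\rho(f) \setminus Z_\rho(g)$, observe that any such $K$ must have $|V_{m(K)}| > \sqrt{2t}$, hence $m(K) > \sqrt{t/2}$, so we may truncate at step $m := \lceil \sqrt{t/2}\,\rceil$. Encode $K$ by the pair consisting of (i) the sequence of edges $(e_1, \ldots, e_m)$ picked in those first $m$ greedy steps, and (ii) the remainder $K \setminus V$ where $V$ is the endpoint-union of those edges; this is injective because $V$ is recovered from the edges. The number of possible first coordinates is at most $s^m$ (at step $j$ the clause $C_{i_j}$ is forced by the previous edges and $e_j$ is one of at most $s$ edges in it), each induced $V$ satisfies $|V| \geq m+1$, and the number of possible remainders given $V$ is $\binom{n-|V|}{k-|V|} \leq (k/n)^{|V|}\binom{n}{k}$. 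Combining these bounds,
\[
|Z_\rho(f) \setminus Z_\rho(g)| \;\leq\; s^m (k/n)^{m+1} \binom{n}{k} \;\leq\; \binom{n}{k}\bigl(sk/n\bigr)^{\sqrt{t/2}}.
\]

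The main obstacle is establishing the vertex-growing invariant in the second paragraph: the realization that greedily picking from a not-yet-covered clause is \emph{forced} to bring in at least one new vertex to $V$. This invariant couples the cost of discovering an edge (a factor $s$) with the gain of committing a new vertex of $K$ (a factor $k/n$), and through the quadratic identity relating edges and vertices of a clique it converts a naive edge-count exponent $m$ into the desired square-root exponent $\sqrt{t/2}$; checking injectivity of the encoding and making the truncation precise then becomes routine accounting.
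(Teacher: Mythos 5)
Your proof is correct and follows essentially the same strategy as the paper's. The paper's argument constructs a pair of nested trees --- a vertex-labelled tree $T$ inside an edge-labelled tree $T'$ --- by repeatedly branching on the first unsatisfied clause, and takes $g$ to be the DNF of the vertex labels at shallow leaves of $T$; your greedy run for a fixed clique $K$ is exactly a root-to-leaf path of that $T$, and your injective encoding $(e_1,\dots,e_m, K\setminus V)$ reproduces the paper's count $|\mathcal{B}_{d+1}| \le s^{d+1}$ of nodes at depth $d+1$ combined with the bound $|B|\ge d+1$ and the estimate $\binom{n-|B|}{k-|B|}\le (k/n)^{|B|}\binom{n}{k}$. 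The crucial observation that an unsatisfied clause must introduce a new vertex into $V$ --- turning $s$ per step into $sk/n$ per step and converting the $t$ edge budget into a $\sqrt{t/2}$ vertex budget --- is the same. The only cosmetic difference is that you define $g$ as the DNF of \emph{all} transversal vertex sets with $\binom{|V|}{2}\le t$ rather than only those reached by the tree, which enlarges $g$ harmlessly since the lemma only constrains the monomial width; both give the stated bound $\binom{n}{k}(sk/n)^{\sqrt{t/2}}$ after the same arithmetic.
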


The DNF to CNF switching requires a random restriction, but after the restriction the functions are the same

\begin{lemma}[DNF to CNF switching]\label{easy_lemma}
For any $g$ monotone $t$-DNF after taking $\rho \sim R_U^{1/(2t)}$ restriction $g$ can be written as an equivalent $s$-CNF with probability $ \geq 1 - 2^{-s-1}$.
\end{lemma}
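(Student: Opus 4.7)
The plan is a canonical-decision-tree argument in the spirit of Håstad's switching lemma, specialized to the monotone setting. Given $g$ and $\rho$, I would build $D(g,\rho)$ recursively: if the current restricted DNF is constant, halt; otherwise pick the lexicographically first minterm $\tau$ of that DNF, query all of $\tau$'s variables that are still free, and branch on the $2^{|\tau^{\star}|}$ outcomes, taking the all-ones outcome to a $1$-leaf and otherwise recursing on the residual $g$ with the newly-$0$ variables set.

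First I would establish that $\operatorname{depth}(D) \leq s$ implies $g_\rho$ is equivalent to a monotone $s$-CNF. The argument is purely by monotonicity: along any path to a $0$-leaf, let $Z$ be the set of variables queried to value $0$; raising the queried-to-$1$ variables and every other free variable to $1$ cannot change the output from $0$, so $Z$ alone forces $g_\rho=0$ and hence contains a maxterm of $g_\rho$ of size at most $|Z| \leq \operatorname{depth}(D) \leq s$.

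The bulk of the work is bounding $\Pr[\operatorname{depth}(D) > s]$. I would parametrize each root-to-leaf path of query-length $L$ by its sequence of outcomes $(A_j, B_j)_{j=1}^m$, where $A_j \sqcup B_j$ is the set of variables queried at step $j$. Canonicity forces the processed minterm $\tau_j$ to be determined by the earlier outcomes, and the queried sets are pairwise disjoint (no variable is queried twice), so the probability over $\rho$ of being consistent with a fixed template is at most $p^L$ (each of the $L$ queried variables must lie independently in $\rho^{-1}(\star)$). A union bound over templates, using the crude branching count $2^L$ at query-level $L$ together with $p = 1/(2t)$, yields
\[
\Pr[\operatorname{depth}(D) > s] \;\leq\; \sum_{L>s} 2^L p^L \;=\; \sum_{L>s} t^{-L} \;\leq\; 2^{-s-1}
\]
via the geometric-series tail, sharpened if necessary by restricting non-terminating branches at step $j$ to $2^{|\tau_j^{\star}|}-1$ outcomes instead of $2^{|\tau_j^{\star}|}$ to cover small-$t$ edge cases.

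The main technical point I expect is verifying canonicity of the parametrization: one must show that the lex-first choice of $\tau_j$ is a function of $(A_i, B_i)_{i<j}$ alone, with no hidden dependence on further details of $\rho$, so that (i) templates are not double-counted across distinct $\rho$'s in the union bound and (ii) the probability factorizes cleanly as $p^L$ (or $p^L(1-p)^{L''}$ once the ``surrounding'' variables of each $\tau_j$ are tracked). Once this bookkeeping is in place, the bound reduces to the geometric estimate above.
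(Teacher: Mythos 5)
Your proposal is the Håstad-style canonical decision tree method, which is a genuinely different route from the paper's, and the one place you flag as ``the main technical point'' is in fact a real gap that your sketch does not close. The claim that ``canonicity forces the processed minterm $\tau_j$ to be determined by the earlier outcomes'' is false in the monotone setting: the lex-first surviving minterm of $g_\rho$ depends on $\rho$ itself, not only on previous query outcomes. Concretely, if $g = x_1x_2 \vee x_3x_4$, then under $\rho = (\star,\star,\star,\star)$ the first minterm is $\{x_1,x_2\}$, while under $\rho = (1,\star,\star,\star)$ it is $\{x_2\}$, so already at level $1$ (with no earlier outcomes) the queried set is not a function of the template. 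To repair this one must enlarge the template so that it also pins down, for each processed term, which of its variables are $\star$ under $\rho$ (the usual Håstad/Razborov encoding records a per-term codeword of size roughly $3^{|q_j|}$, or requires a careful injective reweighting), and the naive $2^L$ template count no longer suffices; as stated, $\sum_{L>s}2^Lp^L$ is an underestimate of the union bound, not an overcount. Moreover even granting $2^L$, for $t=1$ your geometric sum diverges, and with your proposed $2^{|\tau_j^\star|}-1$ fix the tail evaluates to $2^{-s}$, which misses the stated $2^{-s-1}$ by a factor of two.

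The paper avoids all of this by working with a tree that does \emph{not} depend on $\rho$. It builds the maxterm tree $T'$ of $g$ itself (the same construction as in the proof of the CNF-to-DNF lemma applied to the terms of $g$): each node is labeled by a set of edges, a non-leaf branches over the $\leq t$ edges of the first term not yet hit, and a leaf label is a hitting set of all terms. The leaves therefore give a CNF for $g$, with clause width equal to leaf depth. Let $\mathcal{H}_{s+1}$ be the labels at depth exactly $s+1$; each has exactly $s+1$ edges, and $|\mathcal{H}_{s+1}| \leq t^{s+1}$ since the branching factor is $\leq t$. If every $H \in \mathcal{H}_{s+1}$ contains an edge of $\rho^{-1}(1)$, then every leaf at depth $>s$ contains such an edge (labels only grow along paths), so its clause is satisfied by $\rho$ and can be deleted; what remains is an $s$-CNF for $g_\rho$. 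The failure probability is then a plain union bound over a fixed family: $\Pr[\exists H \in \mathcal{H}_{s+1},\, H \subseteq \rho^{-1}(\star)] \leq t^{s+1}(1/(2t))^{s+1} = 2^{-(s+1)}$, with no encoding/decoding argument needed. You should adopt this ``freeze the tree before restricting'' idea; it is exactly what lets the union bound close without the Håstad bookkeeping you correctly sensed was missing.
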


The end result is a $O(\log(n)^2)$ fan-in DNF that is still true on a large number of cliques, therefore it holds that a random 0-1 assignment on the remaining variables satisfies the circuit with high probability. The probability that one of the DNF to CNF switching fails or that the final random assignment does not force the circuit to true, will be exponentially small, allowing the large fan-in bound on the output gate.

\vspace{5mm}

\section{Proofs}\label{sec_proofs}

The following is a list of simple observations, the proof is not included.

\begin{observation}\label{observs}\leavevmode
\begin{enumerate}
    \item If $\mathcal{G} \subseteq \mathcal{H}$ then $i_{\mathcal{G}} \leq i_{\mathcal{H}}$.
    \item If $f \leq g$ then $Z(f) \subseteq Z(g)$.
    \item If $\rho \sim R_U^p$ and $\sigma \sim R_{\rho^{-1}(\star)}^q$ then $\sigma \circ \rho \sim R_U^{pq}$.
    \item If $G = \rho^{-1}(1)$ where $\rho \sim R_{K([n])}^p$ then $G \sim \operatorname{ER}(n, 1-p)$.
\end{enumerate}
\end{observation}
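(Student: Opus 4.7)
The plan is to verify each of the four parts by unpacking definitions; the author marks them as ``simple'' precisely because no real mathematical content is involved, only careful bookkeeping.

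For part (1), I would start from the definition of $i_\mathcal{A}$: if $i_\mathcal{G}(B) = 1$ then some $A \in \mathcal{G}$ satisfies $A \subseteq B$, and since $\mathcal{G} \subseteq \mathcal{H}$ the same $A$ is a witness for $i_\mathcal{H}(B) = 1$. For part (2), the key preliminary observation is that $f \leq g$ implies $f_\rho \leq g_\rho$ pointwise, because restriction merely fixes some inputs to $1$, i.e.\ $f_\rho(A) = f(A \cup \rho^{-1}(1))$. Combined with the defining property $(i_{\mathcal{K}(Z_\rho(f))})_\rho \leq f_\rho$, this yields $(i_{\mathcal{K}(Z_\rho(f))})_\rho \leq g_\rho$, and then the maximality clause in the definition of $Z_\rho(g)$ forces $Z_\rho(f) \subseteq Z_\rho(g)$.

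For part (3), the composed restriction $\sigma \circ \rho$ sends $u \in U$ to $\star$ precisely when both $\rho(u) = \star$ and $\sigma(u) = \star$. Since $\rho$ makes independent choices across $U$, and $\sigma$ (conditional on $\rho$) makes independent choices across $\rho^{-1}(\star)$, these two events combine independently across $u$ with probability $p \cdot q$ at each $u$; matching this against the definition of $R_U^{pq}$ concludes the argument. Part (4) is a direct translation: $u \in G$ iff $\rho(u) = 1$, and under $\rho \sim R_{K([n])}^p$ these events are independent across $u \in K([n])$ with probability $1-p$, which is exactly the definition of $\operatorname{ER}(n, 1-p)$.

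The mildest point requiring care is in (3), where I would briefly note that a mixture of product measures sharing a common product marginal is itself that product measure, so that integrating over $\rho$ does not disrupt the independence established conditionally. Beyond this formality, no step presents a real obstacle, which justifies the author's choice to omit the proofs.
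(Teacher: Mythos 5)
The paper states this observation with ``the proof is not included,'' so there is no argument of its own to compare against; your four derivations are the natural ones and all check out. One point worth making explicit in part (2): the maximality argument needs the family of valid $Z$ (those with $(i_{\mathcal{K}(Z)})_\rho \leq g_\rho$) to have a \emph{unique} maximum, which holds because $i_{\mathcal{K}(Z_1 \cup Z_2)} = i_{\mathcal{K}(Z_1)} \vee i_{\mathcal{K}(Z_2)}$ makes the family closed under union; this is the same implicit fact needed for the paper's definition of $Z_\rho$ to be well-posed, so you are not assuming anything beyond what the paper already does. Your closing remark on mixtures of product measures correctly isolates the only other step in the list that is not pure definition-chasing.
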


The idea for Lemma \ref{hard_lemma} is to build a monotone decision tree both by cliques and by edges simultaneously based on the formula. This gives enough control over the cliques they imply. The extension of edge sets to cliques can only decrease the function but the clique implications $Z(f)$ will not change. The tree is pruned to include only clauses below the cut. This can be achieved with a minimal loss of cliques.

\begin{proof}[Proof of Lemma \ref{hard_lemma}]
First simplify the function $f$ to only include edges from $\rho^{-1}(\star)$. If any of the clauses become trivial, then $f_{\rho} \equiv 1$ and can set $g \equiv 1$. Let's name the clauses $f_{\rho} = \bigwedge_{j=1}^m \left( \bigvee_{e \in q_j} e\right)$ where $|q_j| \leq s$. 

Create two trees $T \subseteq T'$. $T$ has its nodes labeled by subsets of $[n]$, while $T'$ has the labels from subsets of $K([n])$. Furthermore associate to every non-leaf a clause from $f_{\rho}$ both in $T$ and $T'$. For $v \in T$ denote $A(v) \subseteq [n]$ the label and $q_T(v)$ the associated clause. Similarly denote $G(w) \subseteq K([n])$ the label and $q_{T'}(w)$ for $w \in T'$ the clause. Construct the labels such that for $v \in T \subseteq T'$ it holds that $\bigcup G(v) = A(v)$ meaning in particular that $G(v) \subseteq K(A(v))$. Also for $v \in T$ not a leaf in $T$ the labels will have $q_T(v)=q_{T'}(v)$.

The root $r \in T$ and $r \in T'$ has $\emptyset = A(r) = G(r)$. A node $v \in T$ is a leaf if $f_\rho(K(A(v)))=1$, similarly $w \in T'$ is a leaf if $f_\rho(G(w))$. If $v \in T$ is not a leaf (therefore not a leaf in $T'$) find the clauses in $f_\rho$ not satisfied by $K(A(v))$. Let the first clause be $q_j$, then assign $q_T(v)=q_{T'}(v)=q_j$. If $w$ is a leaf in $T$ but not in $T'$ or is not a member of $T$ at all, then choose $q_j$ to be the first clause not satisfied by $G(w)$ and set $q_{T'}(w)=q_j$. Then for $v \in T$ with a clause label $q_T(v)=q_{T'}(v)$ and for each edge $e \in q_T(v)$ add $v'$ a child of $v$ both in $T$ and $T'$ with label $A(v')=A(v) \cup e$ and $G(v')=G(v) \cup \{e\}$. This preserves that $\bigcup G(v') = A(v')$. For $w \in T' \setminus T$ not a leaf, only add the $w'$ children and $G(w')$ labels without any constrains from $T$.

\begin{figure}[ht]
\begin{tikzpicture}[scale = 0.64][line cap=round,line join=round,>=triangle 45,x=1cm,y=1cm]
\clip(-0.7,-2) rectangle (17.5,9);
\draw [line width=1pt] (3,5)-- (1,4);
\draw [line width=1pt] (3,5)-- (5,4);
\draw [line width=1pt] (4,2)-- (5,4);
\draw [line width=1pt] (5,4)-- (6,2);
\draw [line width=1pt] (4,2)-- (3,0);
\draw [line width=1pt] (4,2)-- (4,0);
\draw [line width=1pt] (4,2)-- (5,0);
\draw [line width=1pt] (1,4)-- (0,2);
\draw [line width=1pt] (1,4)-- (1,2);
\draw [line width=1pt] (1,4)-- (2,2);
\draw [line width=1pt] (11.5,0)-- (13,2);
\draw [line width=1pt] (12.5,0)-- (13,2);
\draw [line width=1pt] (13,2)-- (13.5,0);
\draw [line width=1pt] (14.5,0)-- (15,2);
\draw [line width=1pt] (15,2)-- (15.5,0);
\draw [line width=1pt] (16.5,0)-- (15,2);
\draw [line width=1pt] (9,2)-- (10,4);
\draw [line width=1pt] (10,2)-- (10,4);
\draw [line width=1pt] (10,4)-- (11,2);
\draw [line width=1pt] (10,4)-- (12,5);
\draw [line width=1pt] (14,4)-- (12,5);
\draw [line width=1pt] (14,4)-- (13,2);
\draw [line width=1pt] (15,2)-- (14,4);
\begin{scriptsize}
\draw [fill=black] (3,5) circle (4.5pt);
\draw[color=black] (3,5.5) node {$q_1 \emptyset$};
\draw [fill=black] (1,4) circle (4.5pt);
\draw[color=black] (1,4.5) node {$q_3 \{1, 2\}$};
\draw [fill=black] (5,4) circle (4.5pt);
\draw[color=black] (5.5,4.5) node {$q_2 \{1, 3\}$};
\draw [color=black] (0,2) circle (4.5pt);
\draw[color=black] (0,1.5) node {$\{1, 2, 4\}$};
\draw [color=black] (1,2) circle (4.5pt);
\draw[color=black] (1,1) node {$\{1, 2, 5\}$};
\draw [color=black] (2,2) circle (4.5pt);
\draw[color=black] (2,0.5) node {$\{1, 2, 4, 5\}$};
\draw [fill=black] (4,2) circle (4.5pt);
\draw[color=black] (3.5,2.5) node {$q_3 \{1, 2, 3\}$};
\draw [color=black] (6,2) circle (4.5pt);
\draw[color=black] (6,1.5) node {$\{1, 3, 4\}$};
\draw [color=black] (3,0) circle (4.5pt);
\draw[color=black] (3,-0.5) node {$\{1, 2, 3, 4\}$};
\draw [color=black] (4,0) circle (4.5pt);
\draw[color=black] (4,-1) node {$\{1, 2, 3, 5\}$};
\draw [color=black] (5,0) circle (4.5pt);
\draw[color=black] (5,-1.5) node {$\{1, 2, 3, 4, 5\}$};
\draw [fill=black] (12,5) circle (4.5pt);
\draw[color=black] (12,5.5) node {$q_1 \emptyset$};
\draw [fill=black] (10,4) circle (4.5pt);
\draw[color=black] (10,4.5) node {$q_3 \{12\}$};
\draw [fill=black] (14,4) circle (4.5pt);
\draw[color=black] (14,4.5) node {$q_2 \{13\}$};
\draw [fill=black] (13,2) circle (4.5pt);
\draw[color=black] (12.5,2.5) node {$q_3 \{13, 12\}$};
\draw [fill=black] (15,2) circle (4.5pt);
\draw[color=black] (16,2.5) node {$q_3 \{13, 34\}$};
\draw [color=black] (9,2) circle (4.5pt);
\draw[color=black] (9,1.5) node {$\{12, 14\}$};
\draw [color=black] (10,2) circle (4.5pt);
\draw[color=black] (10,1) node {$\{12, 25\}$};
\draw [color=black] (11,2) circle (4.5pt);
\draw[color=black] (11,0.5) node {$\{12, 45\}$};
\draw [color=black] (13.5,0) circle (4.5pt);
\draw[color=black] (13.5,-1.5) node {$\{13, 12, 45\}$};
\draw [color=black] (12.5,0) circle (4.5pt);
\draw[color=black] (12.5,-1) node {$\{13, 12, 25\}$};
\draw [color=black] (11.5,0) circle (4.5pt);
\draw[color=black] (11.5,-0.5) node {$\{13, 12, 14\}$};
\draw [color=black] (14.5,0) circle (4.5pt);
\draw[color=black] (14.5,-0.5) node {$\{13, 34, 14\}$};
\draw [color=black] (15.5,0) circle (4.5pt);
\draw[color=black] (15.5,-1) node {$\{13, 34, 25\}$};
\draw [color=black] (16.5,0) circle (4.5pt);
\draw[color=black] (16.5,-1.5) node {$\{13, 34, 45\}$};
\end{scriptsize}
\end{tikzpicture}
\captionsetup{labelformat=empty}
\centering
\caption{The trees on formula $f= q_1 \wedge q_2 \wedge q_3 = (12 \vee 13) \wedge (12 \vee 34) \wedge (14 \vee 25 \vee 45)$ with clauses numbered accordingly.}
\end{figure}

Call a node's depth its distance from root. Let $\mathcal{A}_d$ be the collection of $A(v)$ where $v$ is a leaf with depth $\leq d$ in $T$. Let $\mathcal{B}_d$ be the collection of $A(v)$ where $v$ is any node in $T$ with depth exactly $d$ (not necessarily a leaf). Similarly define $\mathcal{G}_d$ to be the collection of $G(w)$ for $w \in T'$ leaf node of depth $\leq d$. Let $d(T), d(T')$ be the maximal depth of the trees.

\begin{claim}[Relations involving the $\mathcal{G}_d, \mathcal{A}_d, \mathcal{B}_d$ sets]\label{hard_lemma_claims} \leavevmode
\begin{enumerate}
    \item $i_{\mathcal{G}_{d(T')}} = f_\rho$
    \item $i_{\mathcal{K}(\mathcal{A}_{d(T)})} \leq f_\rho$
    \item $Z_\rho \left(i_{\mathcal{K}(\mathcal{A}_{d(T)})}\right) = Z_{\rho}(f)$
    \item $i_{\mathcal{K}\left(\mathcal{A}_{d(T)}\right)} \leq i_{\mathcal{K}\left(\mathcal{A}_{d} \cup \mathcal{B}_{d+1}\right)}$
    \item $\left| \mathcal{B}_d \right| \leq s^d$
\end{enumerate}
\end{claim}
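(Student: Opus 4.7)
The plan is to dispatch all five parts by tracing the construction of $T$ and $T'$ as monotone decision trees. Parts (1), (2), (4), and (5) are direct bookkeeping or short pruning arguments; part (3), which relates the clique-set of the original circuit to that of the $T$-truncated DNF, is the substantive one and requires a decision-tree walk that matches each $k$-clique in $Z_\rho(f)$ to a leaf of $T$.

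For (1) I verify both inclusions. The direction $i_{\mathcal{G}_{d(T')}} \leq f_\rho$ holds by monotonicity because every leaf $w$ of $T'$ satisfies $f_\rho(G(w))=1$ by construction. For the reverse, given $H$ with $f_\rho(H)=1$, descend $T'$ from the root: at each non-leaf $w$ the associated clause $q_{T'}(w)$ is violated by $G(w)$ but satisfied by $H$, so some $e \in q_{T'}(w)$ lies in $H$, and we descend to the corresponding child while preserving $G(w) \subseteq H$; the walk terminates at a leaf $w$ with $G(w) \in \mathcal{G}_{d(T')}$ and $G(w) \subseteq H$. Part (2) is the easier analogue for $T$: each leaf $v$ has $f_\rho(K(A(v)))=1$, so $i_{\mathcal{K}(\mathcal{A}_{d(T)})}(H)=1$ forces $f_\rho(H)=1$ by monotonicity.

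Part (3) is the key step. The $\subseteq$ direction is immediate from (2) and Observation \ref{observs}.2. For $\supseteq$, take $C \in Z_\rho(f)$, so $K(C)$ satisfies $f_\rho$, and run $T$ as a decision tree on input $K(C)$, maintaining the invariant $A(v) \subseteq C$. At every internal $v$ the clause $q_T(v)$ is violated by $K(A(v))$ but satisfied by $K(C)$, so some $e \in q_T(v)$ has both endpoints in $C$, and the child $v'$ with $A(v')=A(v)\cup e$ still satisfies $A(v') \subseteq C$. The walk reaches a leaf $v \in T$ with $A(v) \in \mathcal{A}_{d(T)}$ and $K(A(v)) \subseteq K(C)$, giving $C \in Z_\rho\bigl(i_{\mathcal{K}(\mathcal{A}_{d(T)})}\bigr)$.

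Finally, (4) follows because every leaf $v$ of $T$ is either at depth $\leq d$, so $A(v) \in \mathcal{A}_d$, or has an ancestor $v'$ at depth $d+1$; since $A$-labels only grow along edges, $A(v') \subseteq A(v)$, whence $K(A(v')) \subseteq K(A(v))$ and $A(v') \in \mathcal{B}_{d+1}$. Part (5) follows from the branching bound: each non-leaf of $T$ has at most $|q_j| \leq s$ children, so the number of depth-$d$ nodes, and hence $|\mathcal{B}_d|$, is at most $s^d$. The main care point is in (3): after the preliminary cleanup the clauses of $f_\rho$ involve only $\rho^{-1}(\star)$-edges, so the edges added along the walk lie in $\rho^{-1}(\star)$ and the containment $K(A(v)) \subseteq K(C)$ survives the restriction in the sense required by the definition of $Z_\rho$.
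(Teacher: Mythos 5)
Your proof is correct and follows essentially the same route as the paper: leaf-to-monotonicity arguments for (1), (2), the decision-tree walk for the converse directions, and the path/branching counts for (4), (5). The only cosmetic difference is in (3), where you re-run the walk directly in $T$ on $K(C)$ instead of invoking part (1) for $T'$ and then locating a $T$-leaf on the root-to-$w$ path via $T \subseteq T'$; both arguments are sound and of the same nature.
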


\vspace{5mm}

\begin{proof}[Proof of Claim \ref{hard_lemma_claims}] \leavevmode
\begin{enumerate}
    \item Working in $U = \rho^{-1}(\star)$, for any $u \subseteq U$ it holds that $f_\rho(u)$ iff for all $q_j$ it is true that $u \cap q_j \neq \emptyset$. Therefore every non leaf $w \in T'$ has an edge $e_w \in q_{T'}(w)$ such that $e_w \in u$. Following the label set increments $G(w)$ to $G(w) \cup \{e_w\}$, provides a path from root to a leaf $w$ such that each label in that path is a subset of $u$ resulting in $G(w) \subseteq u$ and therefore $i_{\mathcal{G}_{d(T')}}(u)=1$. This gives $i_{\mathcal{G}_{d(T')}} \geq f_\rho$. For the other direction take $u$ such that $i_{\mathcal{G}_{d(T')}}(u)=1$. Then there is a leaf $w \in T$ where $G(w) \subseteq u$. By definition of a leaf, $G(w)$ already satisfies $f_\rho$ giving that by monotonicity $f_\rho(u)=1$.
    \item Consider any $u \subseteq U$ such that $i_{\mathcal{K}\left(\mathcal{A}_{d(T)}\right)}(u)=1$, then there must be a leaf $v \in T$ and a corresponding $A(v) \in \mathcal{A}_{d(T)}$ with $K(A(v)) \subseteq u$. Then as $v$ is a leaf $K(A(v))$ satisfies $f_\rho$ and by monotonicity so does $u$.
    \item Using observation \ref{observs} and the previous point it holds that $Z_\rho \left( i_{\mathcal{K}\left(\mathcal{A}_{d(T)}\right)}\right) \subseteq Z_{\rho}(f)$. For the other direction, take any $B \in Z_\rho(f)$. Then $K(B)$ satisfies $f_\rho$ which by above agrees with $i_{\mathcal{G}_{d(T')}}$. Take a leaf $w \in T'$ where $G(w) \subseteq K(B)$. Then as $T \subseteq T'$ the path from the root to $w$ must contain a leaf from $T$, say it is $v$. Then $G(v) \subseteq G(w) \subseteq K(B)$ therefore $A(v) = \bigcup G(v) \subseteq \bigcup G(w) \subseteq B$ so $K(B)$ satisfies $i_{\mathcal{K}\left(\mathcal{A}_{d(T)}\right)}$ as required.
    \item Suppose for $u \subseteq U$ it also holds that $i_{\mathcal{K}(\mathcal{A_{d(T)}})}(u) = 1$, then there must be a leaf $v \in T$ and a corresponding $A(v) \in \mathcal{A}_{d(T)}$ with $K(A(v)) \subseteq u$. If the depth of $v$ is $\leq d$ then $v \in \mathcal{A}_d$ and $i_{\mathcal{K}(\mathcal{A}_d)}(u)=1$. Otherwise there must be a path from root to $v$ going through a node at depth $d+1$, giving that $i_{\mathcal{K}(\mathcal{B}_{d+1})}(u)=1$. Using observation \ref{observs}, the claim follows.
    \item Every set in $\mathcal{B}_d$ can be identified with a path from the root to a node at level $d$. As the tree branches at each step to at most $s$ new nodes, the mentioned bound follows.
\end{enumerate}
\end{proof}

\vspace{5mm}

Combining the above with observation \ref{observs} and by noting for all $d$ it holds that $\mathcal{A}_d \subseteq \mathcal{A}_{d(T)}$, $$Z_\rho \left(i_{\mathcal{K}\left(\mathcal{A}_{d}\right)}\right) \subseteq Z_\rho (f) \subseteq Z_\rho \left(i_{\mathcal{K}\left(\mathcal{A}_{d} \cup \mathcal{B}_{d+1}\right)}\right)$$ this gives $$\left| Z_\rho(f) \setminus Z_\rho \left(i_{\mathcal{K}\left(\mathcal{A}_{d}\right)}\right) \right| \leq  \left| Z_\rho \left(i_{\mathcal{K}\left(\mathcal{A}_{d} \cup \mathcal{B}_{d+1}\right)} \right) \setminus Z_\rho \left(i_{\mathcal{K}\left(\mathcal{A}_{d}\right)}\right) \right| \leq \sum_{B \in \mathcal{B}_{d+1}} \binom{n-|B|}{k-|B|}$$ note that every $B \in \mathcal{B}_{d+1}$ is constructed in $d+1$ steps, each increasing the cardinality by at least $1$ but at most $2$, therefore $d+1 \leq |B| \leq 2d+2$ meaning $$ \left| Z_\rho(f) \setminus Z_\rho \left(i_{\mathcal{K}\left(\mathcal{A}_{d}\right)}\right) \right| \leq s^{d+1} \binom{n-(d+1)}{k-(d+1)} \leq \binom{n}{k}\left(s \frac{k}{n} \right)^{d+1}$$ as in $i_{\mathcal{K}\left(\mathcal{A}_d\right)}$ each clause has at most $\binom{2d}{2}$ edges, setting $d$ such that $t = \binom{2d}{2}$ gives the result with $g = i_{\mathcal{K}\left(\mathcal{A}_d\right)}$.

\end{proof}

\vspace{5mm}

\begin{proof}[Proof of Lemma \ref{easy_lemma}]
Build a tree $T'$ the same way as in Lemma \ref{hard_lemma} (without caring about any $T$) using the clauses of $g$. Write $\mathcal{H}_d$ for the collection of $G(w)$ sets where $w$ is a node at depth exactly $d$. Using this tree, write $g_{\rho}$ as a $s$-CNF if all the $\mathcal{H}_{s+1}$ OR clauses are satisfied by $\rho$. Similar to Claim 7  there is $|\mathcal{H}_{s + 1}| \leq t^{s+1}$ and each $H \in \mathcal{H}_{s+1}$ has exactly $s+1$ edges. Therefore a $\rho \sim R_U^{1/(2t)}$ satisfies all the OR clauses in $\mathcal{H}_{s+1}$ with probability lower bounded by $$1 - |\mathcal{H}_{s + 1}|\left( \frac{1}{2t}\right)^{s+1} \geq 1-2^{-s-1}$$ as needed.
\end{proof}

Then proposition \ref{correl_prop} follows from the combination of the two switchings. 
\begin{proof}[Proof of Proposition \ref{correl_prop}]
First transform the circuit to having alternating AND and OR layers of gates and extend the bottom with dummy gates to have fanin $1$. This results in $f'$ agreeing with $f$ on all inputs, while still $|f'| \leq n^{c_{s}+1}$ and $d(f') \leq c_{d}+1$. Fix $t = 2(c_s+2)^2\log(n)^2$ and $s=n/2k$ this is to ensure that $\left(s \frac{k}{n}\right)^{\sqrt{t/2}} = n^{-(c_s+2)}$ The proof relies on the changing of the bottom layer between $s$-CNFs and $t$-DNFs, after each switch, merging with the layer above therefore reducing the depth. 

Provided the bottom layer is a $t$-DNF, use Lemma \ref{easy_lemma} for each DNF, introducing at depth $i$ a random $\rho_j \sim R^{1/(2t)}_{\rho^{-1}_{(<i)}(\star)}$ where write $\rho_{(<i)} = \underset{j<i}{\circ} \rho_j$ for the composition of previous restrictions. The change is unsuccessful with probability $\leq 2^{-s-1}$ at each DNF, otherwise results in a $s$-CNF. 

Provided the bottom layer is a $s$-CNF use Lemma \ref{hard_lemma}. Each gate $g$ in the bottom CNF layer is replaced with $g'$ a $t$-DNF satisfying $g' \leq g$ and by the selection of parameters, it holds that $$\left| Z_{\rho_{(\leq i)}}(g) \setminus Z_{\rho_{(\leq i)}}(g') \right| \leq \binom{n}{k}\left(s \frac{k}{n}\right)^{\sqrt{t/2}} = \binom{n}{k}n^{-c_{s}-2}.$$

The probability that any of the DNF to CNF switching fails is $\leq n^{c_{s}+1}2^{-s-1}$ by the union bound. Write $\rho = \rho_{(<c_{d}+2)}$. The total number of cliques lost from CNF to DNF switches is at most $\binom{n}{k}n^{-1}$ from the union bound. This gives that with probability $\geq 1-n^{c_{s}+1}2^{-s-1}$ it is possible to find $f''$ a $t$-DNF such that $(f'')_\rho \leq (f')_\rho$ and by noting $Z(f') = \binom{[n]}{k}$ it holds that $\left| Z_{\rho}(f'')\right| \geq \binom{n}{k}(1-n^{-1}).$

Suppose $f'' = \bigvee_j \left( \bigwedge_{e \in q_j} e \right).$ To conclude the proof disjoint sets among $q_j$ will be selected. Iteratively pick the sets $r_j$ from $q_j$ such that they are disjoint. Start with $r_1=q_1$ and suppose inductively that $r_1, ..., r_x$ are selected so far and they are all disjoint. Consider $$X = \{A \in Z_{\rho}(f'') : \forall j \leq x,  \ r_j \cap K(A) = \emptyset\}$$ notice that $$|X| \geq \binom{n}{k}\left( 1-n^{-1}-xs\frac{k(k-1)}{n(n-1)} \right)$$ since each $|r_j| \leq s$ and the possible cliques containing one particular edge has size $\binom{n-2}{k-2}$. Therefore each edge in $r_j$ removes this number of cliques at most. Provided $|X| > 0$ it is possible to pick $A \in X$ and as $f''(K(A) \cup \rho^{-1}(1))=1$ there must be one $q_j$ satisfied by the inclusion of $K(A)$ giving that $q_j$ is disjoint from $\{ r_j : j \leq x \}$ and it is possible to extend the collection. Therefore, at least $x=\frac{n(n-1)}{2sk(k-1)}$ many disjoint clauses can be constructed, giving that after assigning each remaining $\rho^{-1}(\star)$ edges to $\star$ with probability $1/2t$ giving an extra $\sigma \sim R^{1/2t}_{\rho^{-1}(\star)}$ restriction, it holds that $$ \mathbb{P}\left(f''\left(\sigma^{-1}(1)\right)\right)  \geq 1-\prod_{j=1}^x \left(1-\left(1-\frac{1}{t}\right)^{|r_j|}\right) \geq 1 - \left(1-\frac{1}{e^2}\right)^{\frac{n(n-1)}{2\log(n)k(k-1)}}. $$ This combined with the probability that any of the $c_{d}$ many switching fails, gives the bound $$\mathbb{P}\left(f\left((\sigma \circ \rho)^{-1}(1)\right)\right) \geq 1- n^{c_{s}+1}2^{-\frac{n}{2k}-1} - \left(1-\frac{1}{e}\right)^{\frac{n(n-1)}{2\log(n)k(k-1)}} \geq 1-2^{-\frac{n}{3k}}.$$ At each restriction a $\sigma, \rho_j \sim R^{1/(2t)}_{U}$ is used. Every second layer uses one such restriction, and an additional restriction used at the last step, giving a total $\sigma \circ \rho \sim R_{K([n])}^{p}$ where $$p \geq \left(\frac{1}{2t}\right)^{c_d/2+2} = \left(2c_s+2\right)^{-c_d-4} \log(n)^{-c_d-4}.$$
\end{proof}

\vspace{5mm}

\begin{proof}[Proof of Theorem \ref{main_theorem}]
Use proposition \ref{correl_prop} with the same $c_{d}$ and $c_{s}$ constants. This gives every AC0 circuit satisfying the constraints is true on a $G \sim R_{K([n])}^{p}$ input with probability $1-2^{-n/3k}$, therefore all the $f_j$ is true on input $G$ with probability $1-o(1)$ using the union bound, but note that for $\log(n)^{c_{d}+6} < k$ the probability that a clique appears in $G$ is upper bounded by $$\binom{n}{k}(1-p)^{\binom{k}{2}} \leq 2^{\log(n)k - \frac{k^2}{c'(c_d, c_s) \log(n)^{c_d + 3}}} = o(1)$$ using $p$ from proposition \ref{correl_prop}. Therefore the circuit can not express clique as there is a nonzero probability that $G$ forces all the $f_j$ to true, but $G$ still has no clique.
\end{proof}

\section{Concluding remarks} As outlined in the introduction, one can write CLIQUE as an OR of $\binom{n}{k}$ many monotone polynomial-sized circuits, with a top OR gate. This gives a $\binom{n}{k}^{-1}$ one-sided correlation between monotone polynomial-sized circuits and clique, which is correct on all negative instances. One can ask the dual question:

\begin{question}
    What is $S(n, k)$, the maximum one-sided correlation between monotone polynomial-sized circuits and $k$-CLIQUE on $n$ vertices, which is correct on all positive instances?
\end{question}

Here, a partial answer is given, a $S(n, k) \leq 2^{-n/4k} $ (for large enough $k$), that is unfortunately far from the best known constructions. It would be interesting to know tighter bounds for $S(n, k)$. 

\begin{conjecture}
$S(n, k) = 2^{-\omega(n)}$ for $k$ in a suitable range (for example $k=n^{\alpha}$ for some $0 < \alpha < 1$).
\end{conjecture}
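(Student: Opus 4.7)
The gap to close is that Proposition \ref{correl_prop} yields $S(n,k) \le 2^{-n/4k}$, i.e.\ $2^{-n^{1-\alpha}/4}$ when $k = n^{\alpha}$, while the conjecture asks for the exponent to be $\omega(n)$. Chasing the factors of $k$ through the proof of the proposition, three places bleed: (i) the final correlation bound $2^{-n/3k}$ comes from extracting only $x = \Theta(n^{2}/(sk^{2})) = \Theta(n/k)$ disjoint clauses in the bottom $t$-DNF; (ii) the restriction probability $p$ ends up at $\mathrm{polylog}(n)^{-1}$, far smaller than the $O(1)$ one would ideally like; and (iii) the CNF-to-DNF switching fan-in $t$ must be $\Theta(\log^{2} n)$ because Lemma \ref{hard_lemma} has $\sqrt{t/2}$ (not $t$) in the exponent of the clique error, forcing $s = n/2k$ in the reverse direction. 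Reaching $2^{-\omega(n)}$ requires attacking all three at once.

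A first concrete step is to sharpen Lemma \ref{hard_lemma}. The square root arises only because each tree-step may introduce up to two new vertices, and thus $t = \binom{2d}{2}$. A variant that gives priority to edges closing inside the current vertex set, and only opens a new vertex when no such edge is available, should turn $\sqrt{t/2}$ into $\Omega(t/\log k)$, matching the Håstad-style $(sp)^{t}$ shape. With this, one can afford $s = n/\mathrm{polylog}(n)$ and $p = \Omega(1/\mathrm{polylog}(n))$ simultaneously. The second step is to replace the greedy disjoint-clauses argument at the end of Proposition \ref{correl_prop} by a global covering estimate: if $|Z_\rho(f'')| \ge (1-o(1)) \binom{n}{k}$, a uniformly random $k$-clique lies in many monomials of $f''$, and a Janson-type inequality on the final random completion $\sigma \sim R_{\rho^{-1}(\star)}^{1/(2t)}$ should lower-bound the probability of $f''$ firing by $1 - 2^{-\Omega(n^{2}/k^{2})}$ rather than $1 - 2^{-\Omega(n/k)}$. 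Threading these two improvements through the union bound of Theorem \ref{main_theorem} would give $S(n,k) \le 2^{-\Omega(n^{2}/k^{2})}$, which is $2^{-\omega(n)}$ at least for $\alpha < 1/2$, with further pushing possibly covering the whole range $\alpha < 1$.

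The real obstacle is the covering estimate. A $k$-clique has only $\binom{k}{2}$ edges, and any restriction that preserves a given clique must keep each of its edges as $\star$; worse, the events ``clique $A$ survives'' and ``clique $A'$ survives'' are strongly positively correlated whenever $|A \cap A'| \ge 2$, which is precisely the regime where Janson-style inequalities degrade. Overcoming this appears to require either a robust-sunflower extraction inside the bottom DNF in the spirit of \cite{robust_sunfl}, in order to decorrelate the monomials, or a hybrid approach layering Razborov's method of approximations on top of the depth-reduction framework developed in the paper. Either route is well beyond what the present combinatorial machinery delivers, which is why the statement is recorded as a conjecture rather than a theorem.
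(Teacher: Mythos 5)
This statement is recorded in the paper as an open conjecture and carries no proof, so there is nothing in the paper to compare your attempt against. You are aware of this: your submission is not a proof but a roadmap, and you say so explicitly in your closing sentence. That framing is correct.

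What you write is, on the whole, a faithful diagnosis of where the current argument loses. Your point (iii) --- that the $\sqrt{t/2}$ in Lemma~\ref{hard_lemma} stems from each tree step adding up to two new vertices, forcing $t = \binom{2d}{2}$ --- is essentially the same observation the paper itself makes in the concluding remarks: the slack in $d+1 \leq |B| \leq 2d+2$ is the identified bottleneck, and the paper notes that if one could show $|B| \approx 2d+2$ on average, the bound would improve to $S(n,k) \leq 2^{-O(n^2/k)}$. That is precisely the paper's \emph{second} conjecture. Your proposed target of $2^{-\Omega(n^2/k^2)}$ is strictly weaker than that (it only resolves the present conjecture for $\alpha < 1/2$, whereas $n^2/k$ covers all $\alpha < 1$), so even the optimistic endpoint of your program falls short of what the paper itself conjectures should be attainable. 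Your suggestion to replace the greedy disjoint-clause extraction by a Janson-type covering estimate is a genuinely different ingredient from anything the paper attempts, and your identification of the obstruction --- cliques sharing $\geq 2$ vertices give strongly positively correlated survival events under the final random restriction, exactly where Janson's inequality degrades --- is a fair account of why the route is not straightforward. But none of this closes the gap; the statement remains a conjecture, and your text correctly presents it as such.
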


This would also imply that monotone monadic second-order sentences with universal monadic predicates are not strong enough to express $k$-CLIQUE, while monotone existential second-order sentences can express it. Looking at the proof of Lemma \ref{hard_lemma}, the inequality $$\sum_{B \in \mathcal{B}_{d+1}} \binom{n-|B|}{k-|B|} \leq s^{d+1} \binom{n-(d+1)}{k-(d+1)}$$  is used, following from $d+1 \leq |B| \leq 2d+2$. A better approximation of the set sizes on average could provide a stronger result, in particular if all $|B| \approx 2d+2$, then the resulting bound is $S(n, k) \leq 2^{-O(n^2/k)}$. The following stronger conjecture captures this:

\begin{conjecture}
It holds that $S(n, k) \leq 2^{-\Theta(n^2/k)}$ for $k$ in a suitable range.
\end{conjecture}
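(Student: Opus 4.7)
The plan follows the hint in the concluding remarks. The proof of Lemma \ref{hard_lemma} bounds $\sum_{B \in \mathcal{B}_{d+1}} \binom{n-|B|}{k-|B|}$ using the worst-case $|B| \ge d+1$, yet the construction of the tree $T$ adds between one and two new vertices per branching step, so one expects most root-to-depth-$(d+1)$ paths to satisfy $|B| \approx 2(d+1)$. The target is a refined bound of roughly $\binom{n}{k}(C s k/n)^{2(d+1)}$ in place of the current $\binom{n}{k}(sk/n)^{d+1}$, gaining an extra factor of $(k/n)^{d+1}$, which is precisely what the author identifies as sufficient to yield $S(n,k) \le 2^{-\Theta(n^2/k)}$.

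Concretely, I would modify the tree construction so that at each non-leaf $v \in T$ the branching edges from $q_T(v)$ are stratified by whether $e$ has $0$, $1$, or $2$ endpoints in $A(v)$. Let $\mathcal{B}_{d+1}^{(\ell)}$ collect the depth-$(d{+}1)$ labels of size $\ell$, for $d+1 \le \ell \le 2(d+1)$. A path-counting argument should yield $|\mathcal{B}_{d+1}^{(\ell)}| \le \binom{d+1}{2(d+1)-\ell}\,(2(d+1))^{2(d+1)-\ell}\,s^{d+1}$: choose which $2(d+1)-\ell$ of the $d+1$ branching steps reuse a vertex, and pay $|A(v)| \le 2(d+1)$ for the endpoint choice at each such step. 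Multiplying by $\binom{n-\ell}{k-\ell} \approx \binom{n}{k}(k/n)^\ell$ and summing over $\ell$ gives a geometric series dominated by the $\ell = 2(d+1)$ term in the relevant parameter regime. Plugging the refined lemma into Proposition \ref{correl_prop}, a tighter per-switch clique loss lets one carry a larger bottom fan-in $s$ through the alternating CNF/DNF switching; rebalancing $s$, $t$, $p$ then pushes the correlation bound from $2^{-n/3k}$ down to $2^{-\Omega(n^2/k)}$, and the estimate $\binom{n}{k}(1-p)^{\binom{k}{2}} = o(1)$ in the Theorem \ref{main_theorem} argument delivers the conjectured bound across the appropriate range of $k$.

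The main obstacle is an adversarial CNF whose clauses all draw their edges from a small vertex set: then the branching at $v$ cannot avoid reusing vertices of $A(v)$, and $|B|$ really is close to $d+1$ on most paths. Overcoming this seems to require either a structural argument that such concentration is incompatible with $f \ge i_{\mathcal{K}(\binom{[n]}{k})}$, or an additional initial random restriction that, with high probability, scatters each clause's edge set enough that vertex overlap between consecutive branchings becomes rare. A secondary obstacle is preserving the crucial inequality $g_\rho \le f_\rho$ under the modified branching: any altered tree must still yield a monotone DNF that under-approximates $f_\rho$ everywhere, not merely on cliques, which is delicate because a clever branching rule no longer processes clauses in the canonical first-unsatisfied order and may disturb the coupling between $T$ and $T'$ used in Claim \ref{hard_lemma_claims}. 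Handling these two obstacles is where the substantive new work will lie.
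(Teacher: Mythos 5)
This statement is one of the paper's \emph{conjectures}: the paper offers no proof of it, only the observation (in the concluding remarks) that the bound in Lemma \ref{hard_lemma} would improve to $S(n,k)\le 2^{-O(n^2/k)}$ \emph{if} one could show that the labels $B\in\mathcal{B}_{d+1}$ typically have size close to $2(d+1)$. Your proposal is a research plan toward exactly that improvement, not a proof, and it contains a concrete quantitative gap. With the multiplicity bound you propose, $|\mathcal{B}_{d+1}^{(\ell)}|\le\binom{d+1}{2(d+1)-\ell}\bigl(2(d+1)\bigr)^{2(d+1)-\ell}s^{d+1}$, the sum $\sum_\ell |\mathcal{B}_{d+1}^{(\ell)}|\binom{n-\ell}{k-\ell}$ is \emph{not} dominated by the $\ell=2(d+1)$ term: decreasing $\ell$ by one costs you a factor of at most $O(d)$ in the multiplicity but gains a factor of roughly $n/k\gg d$ in $\binom{n-\ell}{k-\ell}$, so the series is dominated by $\ell=d+1$ and you recover (up to worse constants) only the original bound $\binom{n}{k}(sk/n)^{d+1}$. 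To gain anything you would need the number of size-$\ell$ labels for small $\ell$ to be much smaller than $s^{d+1}$, i.e.\ you would need a branching factor much smaller than $s$ at vertex-reusing steps; but a clause of size $s$ can consist entirely of edges with one endpoint in $A(v)$ and $s$ distinct new endpoints, so a ``reuse'' step does not automatically have small branching, and your bookkeeping ($\binom{d+1}{2(d+1)-\ell}$ choices of which steps reuse, times $2(d+1)$ per reused endpoint) does not reduce the $s^{d+1}$ factor at all.

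The deeper obstacle is the one you yourself flag: a CNF whose clauses all live on $O(d)$ vertices forces $|B|\approx d+1$ on essentially every path, and nothing in the hypotheses rules this out --- the bottom-layer CNFs in Proposition \ref{correl_prop} are whatever the alternating switching produces, and only the whole circuit, not each gate, is constrained by $f\ge i_{\mathcal{K}\left(\binom{[n]}{k}\right)}$. Your two suggested remedies (a structural incompatibility argument, or an extra scattering restriction) are exactly where the substance of the conjecture lies and are left entirely undeveloped; likewise, once the branching rule is modified you must re-establish the analogue of Claim \ref{hard_lemma_claims}, in particular $i_{\mathcal{K}(\mathcal{A}_{d(T)})}\le f_\rho$ and $Z_\rho(i_{\mathcal{K}(\mathcal{A}_{d(T)})})=Z_\rho(f)$, which you only note as a worry. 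In short: the paper leaves this open, and your outline does not close it --- the refined counting as stated gives no improvement, and the adversarial concentration of clauses is an unresolved, essential difficulty.
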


Note that while this paper discusses the monotone question, as the method heavily uses that condition, the general question is equally interesting:

\begin{question}
What is the maximum one-sided correlation between polynomial-sized circuits and $k$-CLIQUE on $n$ vertices, which is correct on all positive instances?
\end{question}

\bibliographystyle{alpha}
\bibliography{readings.bib}
\appendix
\end{document}